\definecolor{Red}{rgb}{1,0,0}
\definecolor{Blue}{rgb}{0,0,1}
\definecolor{Green}{rgb}{0,1,0}
\definecolor{magenta}{rgb}{1,0,.6}
\definecolor{lightblue}{rgb}{0,.5,1}
\definecolor{lightpurple}{rgb}{.6,.4,1}
\definecolor{gold}{rgb}{.6,.5,0}
\definecolor{orange}{rgb}{1,0.4,0}
\definecolor{hotpink}{rgb}{1,0,0.5}
\definecolor{newcolor2}{rgb}{.5,.3,.5}
\definecolor{newcolor}{rgb}{0,.3,1}
\definecolor{newcolor3}{rgb}{1,0,.35}
\definecolor{darkgreen1}{rgb}{0, .35, 0}
\definecolor{darkgreen}{rgb}{0, .6, 0}
\definecolor{darkred}{rgb}{.75,0,0}
\newcounter{desccount}
\newcommand{\dref}[1]{\hyperref[#1]{#1}}
\newcommand{\set}[1]{\{#1\}}
\newcommand{\bea}{\begin{align*}}
\newcommand{\eea}{\end{align*}}
\renewcommand{\bar}{\overline}
\renewcommand{\paragraph}[1]{\medskip \noindent \textit{#1}}
\def\s{\sigma}
\def\t{\theta}
\newcommand{\bk}{\color{black}}
\newtheorem{corollary}{Corollary}
\newtheorem{lemma}{Lemma}
\newtheorem{proposition}{Proposition}
\newtheorem{definition}{Definition}
\title{\sc{State-Promoted Investment for Industrial Reforms: an Information Design Approach}\footnote{We would like to thank seminar participants at Yonsei University. This research is supported by the Yonsei Signature Research Cluster Program of 2021 (No. 2021-22-0011, Myungkyu Shim) and by the National Natural Science Foundation of China under Grant No. 72003102 (Ji Zhang). The views expressed in this paper are those of the author and do not necessarily represent that of Korea Development Institute (KDI).}}
\author{
	\sc{Keeyoung Rhee}\thanks{Korea Development Institute (KDI). E-mail: ky.rhee829@gmail.com}
	\and
	\sc{Myungkyu Shim}\thanks{Yonsei University. E-mail: myungkyushim@yonsei.ac.kr}\\
	\and
	\sc{Ji Zhang}\thanks{PBC School of Finance, Tsinghua University. Email: zhangji@pbcsf.tsinghua.edu.cn}\\
}
\date{\today}
\begin{document}

\maketitle

\begin{abstract}

We analyze the optimal strategy for a government to promote large-scale investment projects under information frictions. Specifically, we propose a model where the government collects information on profitability of each investment and discloses it to private investors \emph{\`{a} la} \citet{KG2011}. We derive the government's optimal information policy, which is characterized as threshold values for the unknown profitability of the projects released to the private investors, and study how the underlying features of the economy affect the optimal policies. We find that when multiple projects are available, the government promotes the project with a bigger spillover effect by fully revealing the true state of the economy only when its profitability is substantially high. Moreover, the development of the financial/information market also affects the optimal rule.   \\


\noindent	\textbf{JEL}: D80, G18\\
	\textbf{Keywords}: Bayesian persuasion, information friction, state-promoted investments \\
\end{abstract}
\vspace{-0.2in}

\thispagestyle{empty} 

\newpage

\pagebreak \setcounter{page}{1}
\thispagestyle{empty}

\section{Introduction} \label{intro}

State-promoted investments in developing economies, industrial policies to reform the industry structure for instance, have been analyzed in many dimensions. \cite{BMS2013} analyzed how a well-constructed industrial policies that relax financial constraints of selected firms can promote growth at the early stage of growth, while \cite{HK2009} quantified the misallocation problem possibly originated from government's policy. The government interventions in advanced economies draw great attentions of researchers as well: \cite{BG2015}, for example, analyzed when the government intervention, in the form of bailouts, in firms' activity can be effective and \cite{ALP2018} studied the policy implication in the economy with startup firms.

While the previous literature has examined the possible merits and demerits of such policies in the model with frictions, financial frictions as an example, there has been less attention to two important features of state-promoted investments that (i) there exists a gap between the information that can be acquired by the individual investors, who provide the fund, and the information the government has and (ii) such investment might exhibit positive externality on the aggregate economy. 
More specifically, first, the success rate and potential benefits of the investment are unknown ex ante. In particular, the benefits of certain investment require a long time to reveal. For instance, it took more than 30 years for Shenzhen in China to become a successful industrial city. It took more than a decade to enjoy economic/social profits from ``Gyeongbu Expressway,'' a massive investment on infrastructure connecting Seould and Busan, two major cities in South Korea. Each individual investor might have limited access to information to make precise prediction on profits from such industrial reforms, especially when the financial/information market is underdeveloped as was in South Korea in 1960 or in China in early 1990s. In contrast, compared to private investors (both domestic and foreign), the government is more aware of the success rate of the investment from government-run research institutes and/or experts.

	The second important feature of state-promoted investments for industrial reforms is that these investment projects exhibit potential positive externality on the aggregate economy as it is usually a massive investment on infrastructure and/or basic industries. In the competitive equilibrium, however, this sort of externality is not taken into account when investors make their own investment decisions. For instance, positive externality of the industrial transformation to reduce carbon dioxide (CO2) on the aggregate economy is not easy to measure from the perspective of each individual investor. Hence, if the government/policy maker has better information to predict future benefits and internalize additional benefits from the investment, it might be willing to pursue the investment project that will not be funded otherwise. 

 This paper aims to analyze government's optimal strategy in providing information related to the return of a project that the government has the incentive to promote and facilitate its financing. In particular, we address the following questions: What would be the optimal information design when the investors do or do not have access to private information to evaluate the profitability of the investment projects? Under what conditions can the government increase the probability of initiating the investment project that benefits the economy with externality, which is not considered by the investor? Does greater transparency on the profitability of the investment project increase the probability to be successfully funded?


	In order to answer the above questions, we develop and analyze a model in a Bayesian Persuasion framework pioneered by \cite{KG2011}. We introduce the role of government in designing an information policy to finance a large investment project. Importantly, we assume that the return from the investment is uncertain; the government can run an experiment to learn the return and provide (an imperfect) signal about the return to the public. Once private investors update their own information sets, they can either (1) follow the government's signal or (2) conduct their own experiments to learn about the true profitability of the project. We further introduce the possibility that there might exist an extra benefit for the government from the successful investment, by paramaterizing the benefit that increases the government's utility while does not have any impact on the investor's utility: This reduced-form approach is to capture the idea that 
individual investors would not internalize the benefit from the large scale investment that can potentially have positive externality/spillover effect on other industries.\footnote{While we focus on the positive side of the government, it is also possible to consider negative side; a ruling party might increase their chance of winning the upcoming elections through the increased GDP. This is a case with negative externality of the investment project: Our framework can be still applied to such a circumstance but normative implication would be different.} Finally, we also consider the equilibrium property of the model when there are two investment projects while only one of them can be funded in order to study the possible capital (mis-)allocation problem in our model.

There are several noteworthy findings from our analysis. First, the government sets up a threshold information disclosure rule to increase the probability of the project being funded. To fix the idea, we consider a benchmark case with a single project: When we design a signal structure such that the government manipulates information only when the benefit of the project is low, the optimal information policy is to lower transparency of the signal as far as possible.

	Second, when the cost of accessing to the additional information sources is substantially high so that investors should rely on the signal sent by the government, the government can set lower threshold for the signal. This reflects the fact that it is easier for the government to launch the project when the financial market is underdeveloped. 
In other words, transparency of the government should become greater as more (private) information becomes accessible to each investor.

	Third, when there exist two projects with different degree of positive spillover/externality, the threshold level for each project crucially depends on its intrinsic profitability. Interestingly, the transparency level required to successfully launch a project with high (\textit{resp.} low) externality is monotonically (\textit{resp.} non-monotonically) related to its unknown profitability. In particular, there are two thresholds for the profitability, low threshold ($\underline{p}$) and high threshold ($\bar{p}$), in which the government would fully reveal its own information on its preferred project only when the profitability of the project is higher than $\bar{p}$. For less-preferred project, the government perfectly unveils its own information when the profitability is at the middle ($p \in (\underline{p},\bar{p})$) and not if profitability is either high or low.
	
	Fourth, as the positive effect of the preferred project becomes greater, the region at which the government would choose to be transparent becomes narrower. This is because the government has an incentive to launch the project regardless of its profitability. Lastly, the development of the financial market has a nontrivial effect on the thresholds for profitability; both threshold levels increase when the initial information cost is substantially low. This narrows the region at which transparency of the signal for the preferred project while has an ambiguous effect on the signal for the less-preferred one. When the initial cost is sufficiently high (but not too much), in contrast, low (\textit{resp.} high) threshold level increases (\textit{resp.} decreases). In this case, the region at which signal for preferred (\textit{resp.} non-preferred) project becomes more transparent enlarges (\textit{resp.} compresses).

\section{Related Literature}	\label{sec:rl}

	Our work is related to several different strands of literature. First, as is discussed above, this paper is in line with the literature on industrial policy \citep{HK2009, BMS2013, BG2015, ALP2018}. Our research is differentiated from the previous literature by designing optimal information policy, which reflects the possible gap in information between investor and the government.

	Second, our paper is also related to the literature on Bayesian persuasion and information design pioneered by \citep{KG2011, che2013pandering}. Particularly, a number of recent studies focus on the information design problem in which multiple senders competitively provide information to influence a receiver's choice in their own favors \citep{hirsch2015competitive,gentzkow2016competition, boleslavsky2018limited}. A key assumption on their models is that each sender provides information on only one option in favor of the sender. However, in our model, there is only one sender who provides the receiver with additional information on more than one options, even including the option that the sender wishes not to be chosen. In this setup, we can study how a sender endogenously controls the quality of information on different options to encourage the receiver to select the option that the sender prefers the most, which cannot be captured in the previous literature.\footnote{\cite{dworczak2019simple} also provides an example of a multidimensional information design problem. Specifically, their approach allows the signal on one option to be correlated with the true state of the other option. However, we consider different situations in which a policy maker independently evaluates whether a project is worth being funded, which is believed to be more suited to state-promoted investment policies in many countries.} Furthermore, the recent work of \citet{candogan2021optimal} characterizes an optimal disclosure strategy when the receiver has private information. Our work focuses on the cases in which the receiver can access to her own private information \emph{ex post}, and therefore, the sender must take into account the receiver's incentive to acquire her private information in the information design problem.

	Last but not least, our finding also contributes to the literature on transparency of the policy makers on the information it possess. Since \cite{MS02}'s seminal work, there has been intensive debate on whether a policy maker should transparently release its own information or not (see \cite{AngeletosPavan:04}; \cite{AngeletosPavan:07}; \cite{amador_weill}; \cite{CFP14}; and \cite{GY2019} among many others). Our work considers a framework in which a government (policy maker) has an investment project that it would like to pursue, which provides a new perspective to this literature.

Remaining sections are organized as follows. Section \ref{sec:model} introduces the benchmark model. Section \ref{sec:bm} then provides the equilibrium properties of the benchmark model and Section \ref{sec:main} extends the benchmark analysis. Section \ref{sec:conclusion} concludes. All proofs are relegated to Appendix. 


\section{The Model}	\label{sec:model}

	We consider a two-date ($t = 0, 1$) economy of a policy maker (henceforth abbreviated to PM) and an investor at a financial market. When the game begins at $t = 0$, PM designs and conducts experiments on multiple financial projects respectively and then releases the experiment results to the investor. After receiving the information, the investor decides whether or not to gather more information on each financial project by herself, and then chooses one of these projects to fund. The return of the project funded by the investor is realized in $t = 1$.
	
	There are two projects in this economy, labeled $A$ and $B$. Each project, if funded by the investor, yields a binary stochastic return in $t = 1$. Specifically, we assume that the returns of projects $A$ and $B$ are \emph{i.i.d.} random variables $\t_A$ and $\t_B$ such that $\t_i = 1$ with probability $p \in (0, 1)$ and $\t_i = 0$ with probability $1 - p$ in $t = 1$ for each $i \in \set{A, B}$. Each project $i \in \set{A, B}$ requires a financial management cost $c > 0$ to be matured in $t = 1$.

 Importantly, we assume that two projects are heterogeneous in their positive effects on the aggregate economy. One possible interpretation of this effect is positive externality from industries that generate spillover effects on other industries. Each individual investor does not internalize such an effect so that he/she might undervalue the profit of the investment. On the contrary, the government, which might act as a social planner\footnote{In this sense, we consider a government in line with \cite{AJR05}, which is an institution that can spur the economic growth in the long-run.}, internalizes it to pursue the investment project. One might alternatively consider this situation as PM possessing superior information about the additional surplus from the project, denoted as $\mu_i > 0$, that is generated once project $i$ is funded. Individuals do not have enough information on it hence ignore the unobservable returns when making an investment decision. We assume that the spillover effect from the investment is sufficiently high, so PM prefers either project being funded:
	\begin{align}	\label{eq:invest}
	\mu_A > \mu_B > c.
	\end{align}
	

Without loss of generality, we assume that $\mu_A > \mu_B$; under the same condition, PM prefers project $A$ to be funded. Under this assumption, project $A$ refers to a new industry that a government wishes to replace with an old industry (project $B$). However, PM is assumed to own zero initial endowment, so she cannot self-finance any project.
	
	The investor is endowed with initial wealth $c > 0$, so she can finance either project $A$ or $B$, but not both. However, we assume
	\begin{align} \label{eq:risk}
	p < c.
	\end{align}

	Under this assumption, the investor does not want to finance any of these projects without any additional information because it is too risky to fund the projects.
	
	To promote investments, PM conducts a public experiment in a Bayesian persuasion framework \emph{\`{a} la} \citet{KG2011} to reduce uncertainty about the financial projects in need of funding. Specifically, PM commits to design a binary signal $s_i \in \set{0, 1}$ for each project $i \in \set{A, B}$ such that $Pr(s_i = 1 | \t_i = 1) = \s^1_i \in [0, 1]$ and $Pr(s_i = 0 | \t_i = 0) = \s^0_i \in [0, 1]$, and then disclose the realized signals to the investor. Without loss of generality, we throughout restrict our attention to the signal spaces $\s^1_A = \s^1_B = 1$ (PM always sends a signal that correctly reveals the profitability of the project if the project is really a profitable one) and simplify the notation $\s^0_i$ to $\s_i$.\footnote{To influence the investor's decision, any signal $\hat{s}_i$ is required to yield the posterior belief
$$\hat{p} := Pr(\t_i = 1 | \hat{s}_i = 1) > c > Pr(\t_i = 1 | \hat{s}_i = 0) = (1 - \hat{p}).$$
	For any feasible value of $\hat{p}$, it is not difficult to find a signal $s_i$ with $\s^1_i = 1$ and $\s^0_i \in (0, 1)$ that replicates the same posterior belief as $\hat{s}_i$.} Note that this conditional probability $\s_i$ sufficiently captures accuracy of the signal $s_i$. If $\s_i=1$, the signal from PM perfectly reveals the state of the corresponding project $i$; one can interpret this as ``perfect'' transparency case. On the contrary, there is an uncertainty on the state of the economy from the perspective of the investor if $\s_i \in (0,1)$ and this uncertainty becomes greater as $\s_i$ becomes lower. Hence, we can interpret transparency of the signal from PM as an increasing function of $\s_i$.

	We further assume that the investor can acquire additional information (signal) by paying extra cost. Specifically, we assume that the investor will perfectly know the true value of $\t_i$ if she exerts a non-financial effort $e > 0$ for each project $i \in \set{A, B}$. The effort cost $e > 0$ represents the status of financial market development. That is, market participants can find useful investment-related information more effectively as the financial market operates more efficiently. For example, we can correspond less-developed (\emph{resp.} well-developed) financial market in South Korea in 1960s (\emph{resp.} 2010s) to the case of a high $e$ (\emph{resp.} a low $e$). Namely, in the economy with underdeveloped financial market, individual investors cannot obtain valuable information on the profitability hence they should rely on public information disseminated by PM. In contrast, once the financial market (or information market) develops further, access to information becomes relatively easy, which will lower the investors' reliance on the public information.

	Lastly, for simplicity of our analysis, we assume that the investor, when she is indifferent between two projects, breaks a tie in favor of funding project $A$.\footnote{In fact, the assumption $\mu_A > \mu_B > 0$ implies that if the investor randomizes in project selection in the event of an indifference, PM can strictly increase his expected payoff by changing the signal structure to slightly increase the posterior success probability of project $A$.}
		
\section{Benchmark Analysis: Economy with Single Project}	\label{sec:bm}

We first study a benchmark economy in which there exists only one project, project $A$. This benchmark analysis will facilitate to understand the basic framework of strategic interplay between PM and the investor and its impact on informativeness of PM's signal. It will also provide a context for our main analysis that will follow in Section \ref{sec:main}.

It would be convenient for the expositional purpose to define the equilibrium at first:

\begin{definition} [Equilibrium of single-project economy]
An equilibrium of the economy with single-project is characterized by an information policy $\s_A \in (0, 1]$, such that:

 (1) It is optimal for the investor not to exert an effort for the additional information;

 (2) based on the posterior belief updated after PM's signal, the investor funds one project that gives the highest positive expected payoff;

 (3) the information policy maximizes PM's expected payoff.

\end{definition}
	
	We first study how informative PM's signal $s_A$ should be in his best favor by persuading the investor to fund the project. In other words, finding a cut-off value for the signal to satisfy the first part of the definition. Since $p < c$ from \eqref{eq:risk}, the investor will not fund the project without any additional information. Since we focus on $Pr(s_A = 1 | \t_A = 1) = 1 \ge 1 - \s_A = Pr(s_A = 1 | \t_A = 0)$, the signal $s_A$ must be sufficiently informative to an extent that the investor at least expects non-negative expected net return of the project after observing $s_A = 1$. That is, the posterior belief updated from $s_A = 1$ must satisfy
	$$Pr(\t_A = 1 | s_A = 1) = \frac{p}{p + (1 - p)(1 - \s_A)} \ge c,$$
	which yields the following necessary condition on $\s_A$ (non-negative profit condition for the investor):		\begin{align}	\label{eq:nc1_sig}
	\s_A \ge \underline{\s} := 1 - (1 - c)\frac{p}{1 - p}.
	\end{align}
	The condition \eqref{eq:nc1_sig} on $\s_A$ implies that PM must reject the promotion of the investment with a sufficiently high probability when the project turns out to an unprofitable one. Note that higher $p$ lowers $\underline{\s}$, which is an intuitive result: If the intrinsic profitability is sufficiently high, the project would be financed even the signal from PM is not that accurate.
	
	Furthermore, PM's signal must dissuade the investor from accessing to her own private information (condition for non endogenous information acquisition). This is because PM will not be able to influence the investor's belief once the investor obtains her own information that accurately predicts the net return of the project. Since $Pr(\t_A = 0 | s_A = 0) = 1$, the investor does not need any additional information if she receives $s_A = 0$. If the investor decides to obtain $e > 0$ after observing PM's signal $s_A = 1$, she will get the expected payoff $Pr(\t_A = 1| s_A = 1) - e.$ If the investor decides not to collect more information by herself, her expected payoff will be $Pr(\t_A = 1 | s_A = 1)(1 - c).$ Therefore, PM can provide the investor with an incentive not to obtain her own information if and only if
	$$Pr(\t_A = 1| s_A = 1) - e \le Pr(\t_A = 1 | s_A = 1)(1 - c),$$
	which yields the following condition:
	\begin{align}	\label{eq:nc2_sig}
	\s_A \ge \overline{\s} := 1 - \frac{e}{c - e} \cdot \frac{p}{1 - p}.
	\end{align}
	The necessary condition \eqref{eq:nc2_sig} requires PM to provide sufficiently accurate information on the project so that the investor has no incentive to access to her own source of the information at some cost. It is also worth noting that $\overline{\s}$ is decreasing in $e$, which is intuitive: the threshold level becomes lower as it becomes more difficult to obtain information (high $e$), which lowers the incentive of PM to be transparent.

 	Lastly, it is necessary to ensure that the information structure assures PM to possess some gains from launching the investment project. Given that PM's signal $s_A = 1$ persuades the investor to fund the project, PM's expected payoff, denoted by $W_A$, is equal to
	\begin{equation}		\label{eq:PM_U_bm}
	\begin{aligned}
	W_A &:= Pr(s_A = 1) \left[ Pr(\t_A = 1 | s_A = 1) - c + \mu_A \right] = Pr(\t_A = 1) + Pr(s_A = 1) (\mu_A - c)	\\
		&= p + [ p + (1 - p) (1 - \s_A) ] (\mu_A - c).
	\end{aligned}
	\end{equation}
	$\mu_A - c$ is the net gains from initiating the investment, which is  positive from \eqref{eq:invest}. Hence, one can easily find that $W_A$ is decreasing in $\s_A$: in order to maximize the chance of funding, PM prefers promoting the project as the one worth investing (i.e. $s_A = 1$), although it is indeed unprofitable ($\t_A = 0$). In other words, PM prefers making $\s_A$ as uninformative as possible to an extent that he can maximize the likelihood of the investment even though the investment gives expected losses to the investor.

	 By summarizing analysis discussed above, we can describe the optimal information policy as Proposition \ref{prop:opt_sig_bm}. Note that there is a threshold for the effort, $\overline{e}$, in the proposition since the investor's incentive to collect additional information is a function of the effort, as equation \eqref{eq:nc2_sig} dictates.
	
\begin{proposition} [Optimal information policy in the benchmark economy]	\label{prop:opt_sig_bm}
There exists a threshold $\overline{e} : = \frac{(1 - c)c}{2 - c}$ such that PM chooses $\s_A = \overline{\s} (> \underline{\s})$ if $e \le \overline{e} $ and $\s_A = \underline{\s} (> \overline{\s})$ if $e > \overline{e}$.
\end{proposition}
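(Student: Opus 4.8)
The plan is to recognize that Proposition~\ref{prop:opt_sig_bm} is at bottom a one-variable constrained optimization whose ingredients have already been assembled in the three displays preceding the statement. The first step is to read off the feasible set for $\s_A$: an implementable policy must satisfy the non-negative profit condition $\s_A\ge\underline{\s}$ from \eqref{eq:nc1_sig} (so that $s_A=1$ persuades the investor to fund) and the no-acquisition condition $\s_A\ge\overline{\s}$ from \eqref{eq:nc2_sig} (so that the investor does not deviate to private information gathering), together with $\s_A\le 1$. Hence the feasible set is the interval $[\max\{\underline{\s},\overline{\s}\},\,1]$. The second step uses \eqref{eq:PM_U_bm}: since $W_A=p+[\,p+(1-p)(1-\s_A)\,](\mu_A-c)$ with $\mu_A-c>0$ by \eqref{eq:invest}, the objective is strictly decreasing in $\s_A$, its derivative being $-(1-p)(\mu_A-c)<0$. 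PM therefore optimally picks the smallest feasible value, so $\s_A^\ast=\max\{\underline{\s},\overline{\s}\}$, and the whole content of the proposition reduces to deciding which threshold is larger.

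The third and only substantive step is that comparison, which I would carry out by subtraction:
\[
\overline{\s}-\underline{\s}=\Big[(1-c)-\frac{e}{c-e}\Big]\frac{p}{1-p}.
\]
Because $\tfrac{p}{1-p}>0$, the sign of $\overline{\s}-\underline{\s}$ is that of $(1-c)-\tfrac{e}{c-e}$. Under $e<c$, so that $c-e>0$, clearing the denominator gives
\[
\overline{\s}\ge\underline{\s}\iff(1-c)(c-e)\ge e\iff(1-c)c\ge e(2-c)\iff e\le\frac{(1-c)c}{2-c}=\overline{e},
\]
which is exactly the threshold in the statement. Thus for $e\le\overline{e}$ the binding constraint is the no-acquisition condition and $\s_A^\ast=\overline{\s}\ge\underline{\s}$, while for $e>\overline{e}$ the binding constraint is the profitability condition and $\s_A^\ast=\underline{\s}>\overline{\s}$.

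It remains to discharge feasibility and an edge case. From $p<c$ in \eqref{eq:risk} one checks $\underline{\s}\in(0,1)$ (indeed $\underline{\s}>1-c$), and when $e\le\overline{e}$ one has $0<\underline{\s}\le\overline{\s}\le 1$, so the selected policy always lies in the admissible range $(0,1]$ required by the equilibrium definition. The only point needing care --- and the single genuine obstacle --- is the role of $e<c$: the expression for $\overline{\s}$ is obtained by dividing the no-acquisition inequality $pe\ge(c-e)(1-p)(1-\s_A)$ through by $c-e$, and when $e\ge c$ that inequality holds for every $\s_A\in[0,1]$, so \eqref{eq:nc2_sig} becomes vacuous rather than a binding lower bound. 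Since $\overline{e}=\tfrac{(1-c)c}{2-c}<c$, this degenerate region sits entirely within the $e>\overline{e}$ case, where the conclusion $\s_A^\ast=\underline{\s}$ is unaffected; I would therefore either impose $e<c$ as a maintained assumption or remark explicitly that for $e\ge c$ constraint \eqref{eq:nc2_sig} drops out and the profitability constraint alone determines the optimum. Beyond this bookkeeping the proof is a direct monotonicity-and-comparison argument with no hard analytic content.
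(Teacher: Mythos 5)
Your proposal is correct and follows essentially the same route as the paper's proof: $W_A$ in \eqref{eq:PM_U_bm} is strictly decreasing in $\s_A$, so PM optimally picks the smallest accuracy consistent with \eqref{eq:nc1_sig} and \eqref{eq:nc2_sig}, namely $\s^*_A = \underline{\s} \vee \overline{\s}$. If anything, your write-up is more complete than the paper's one-line argument, since you explicitly verify that $\overline{\s} \ge \underline{\s}$ holds exactly when $e \le \overline{e} = \frac{(1-c)c}{2-c}$ (a computation the paper leaves implicit in the statement of the threshold) and you correctly note the degenerate region $e \ge c$, where \eqref{eq:nc2_sig} is vacuous but the conclusion $\s^*_A = \underline{\s}$ is unaffected because $\overline{e} < c$.
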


	The intuition for the information policy is straightforward. If the cost of acquiring additional information is sufficiently high, the investor will not access to her own source of information although $\s_A$ is relatively low, i.e., PM's signal is not very accurate. As a result, PM can increase the expected surplus $Pr(s_A = 1) \mu_A$ from the investment by lowering $\s_A$, i.e., increasing the probability of $s_A = 1$ conditional on $\t_A = 0$, as much as possible. However, if the cost of collecting the additional information is low, then the investor has a strong incentive to use her own information rather than the public information PM provides. To make his signal reliable, PM must make his signal more creditworthy by increasing $\s_A$. In both cases, the optimal level of accuracy must be equal to the respective lower bounds $\underline{\s}$ and $\overline{\s}$: PM's primary objective is to maximize the possibility of generating the non-financial surplus $\mu_A$, which is equal to $Pr(s_A = 1) = p + (1 - p)(1 - \s_A)$. It is also worth noting from \eqref{eq:nc2_sig} that $\overline{\s}$ is decreasing in $e$: PM's signal must provide more accurate information as the investor has a stronger incentive to access to her own information.

	Above finding has an important but natural implication on the degree of transparency of PM's signal.

\begin{corollary} [Optimal transparency] \label{coro:transparency_benchmark}
The optimal signal becomes more transparent as either $p$ or $e$ is sufficiently low.
\end{corollary}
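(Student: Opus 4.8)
The plan is to read off from Proposition~\ref{prop:opt_sig_bm} a single closed-form description of the optimal accuracy and then simply check its monotonicity in $p$ and in $e$. Since the paper defines transparency as an increasing function of $\s_A$, it suffices to show that the optimal $\s_A$ rises whenever $p$ or $e$ falls. The first step I would take is to rewrite the conclusion of Proposition~\ref{prop:opt_sig_bm} as $\s_A^{\ast}=\max\{\underline{\s},\overline{\s}\}$: because $W_A$ in \eqref{eq:PM_U_bm} is strictly decreasing in $\s_A$ (as $\mu_A-c>0$), PM picks the smallest $\s_A$ compatible with the two lower bounds \eqref{eq:nc1_sig} and \eqref{eq:nc2_sig}, i.e. the larger of the two, and the equivalence $\overline{\s}\gtrless\underline{\s}\iff e\lessgtr\overline{e}$ recovers the two cases of the proposition. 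This reformulation makes the whole corollary a monotonicity statement about $\max\{\underline{\s},\overline{\s}\}$.

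Next I would handle the two comparative statics separately. For $p$, note that $\tfrac{p}{1-p}$ is strictly increasing on $(0,1)$, and that both thresholds have the form $1-(\text{positive constant})\cdot\tfrac{p}{1-p}$: in $\underline{\s}=1-(1-c)\tfrac{p}{1-p}$ the constant $1-c$ is positive since $c<1$, and in $\overline{\s}=1-\tfrac{e}{c-e}\tfrac{p}{1-p}$ the constant $\tfrac{e}{c-e}$ is positive since $0<e<c$. Hence each threshold, and therefore their maximum $\s_A^{\ast}$, is strictly decreasing in $p$. The key observation here is that the regime cutoff $\overline{e}=\tfrac{(1-c)c}{2-c}$ does not depend on $p$, so varying $p$ never flips which bound is active; $\s_A^{\ast}$ is thus unambiguously decreasing in $p$, i.e. more transparent for lower $p$.

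For $e$, only $\overline{\s}$ moves. Since $\tfrac{d}{de}\tfrac{e}{c-e}=\tfrac{c}{(c-e)^2}>0$, the bound $\overline{\s}$ is strictly decreasing in $e$ while $\underline{\s}$ is constant in $e$; consequently $\s_A^{\ast}=\max\{\underline{\s},\overline{\s}\}$ is weakly decreasing in $e$ and strictly decreasing on the range $e<\overline{e}$ where the $\overline{\s}$-constraint binds — exactly the sense in which the signal is more transparent when $e$ is sufficiently low. The only point that demands care, and which I would treat as the main (albeit mild) obstacle, is the kink at $e=\overline{e}$: I must verify that the switch from the $\overline{\s}$-regime to the $\underline{\s}$-regime preserves monotonicity rather than reversing it. This is precisely what the equivalence $\overline{\s}\gtrless\underline{\s}\iff e\lessgtr\overline{e}$ established in the first step guarantees, since it shows the active bound is continuous and weakly decreasing across the threshold. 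Combining the two monotonicities yields the corollary.
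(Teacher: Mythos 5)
Your proposal is correct and follows essentially the same route as the paper: the paper treats this corollary as an immediate consequence of Proposition~\ref{prop:opt_sig_bm}, with the optimal accuracy $\s_A^* = \underline{\s} \vee \overline{\s}$ and the monotonicity of $\underline{\s}$ and $\overline{\s}$ in $p$ (noted after \eqref{eq:nc1_sig}) and of $\overline{\s}$ in $e$ (noted after \eqref{eq:nc2_sig}). Your only addition is the explicit check that monotonicity survives the regime switch at $e=\overline{e}$, which the paper leaves implicit but which your equivalence $\overline{\s}\gtrless\underline{\s}\iff e\lessgtr\overline{e}$ settles correctly.
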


	Transparency (i.e., high $\s_A$) may not be socially optimal, which coincides with \cite{MS02} but in a different context. The optimal degree of transparency for the policymaking (eg. government, central bank) crucially depends on (1) the nature of the policy and (2) the degree of development of financial/information market in which investors can find (private) information.\footnote{Since we consider a PM that acts as a social planner, the notion of optimality considered in our model is in line with \cite{AngeletosPavan:07}.} If (1) the project can be easily implemented (high $p$) and/or (2) the financial market is underdeveloped, providing less transparent information (see equation \eqref{eq:nc1_sig} and \eqref{eq:nc2_sig}) is optimal. More transparency would be socially beneficial, on the contrary, if (1) the investor holds a pessimistic view on the project (low $p$) and/or (2) the financial market is well-developed. In this case, the investor has a relatively easier access to her own information gathered at the financial market so that she can make a (right) investment decision without relying on the public information. To avoid this unwelcoming decision made by the investor, PM must provide accurate public information to the investor.

\section{Optimal Information Design with Two Projects}	\label{sec:main}

	We now return to the main model with two projects, which yield different values of social surplus. In particular, we first characterize the optimal signal structure and then analyze how it changes with various financial market characteristics, such as the inherent risks of projects and the efficacy of information acquisition by private investors, captured by $p$ and $e$ in our model, respectively.

\subsection{Equilibrium Characterization} \label{sec:eq_cha}

	Let $(\s^{*}_A, \s^{*}_B)$ denote the optimal values of $(\s_A, \s_B)$, which satisfies the following definition of equilibrium.	

\begin{definition} [Equilibrium of two-projects economy]
The equilibrium of the economy with two-projects consists of an information policy, $(\s_A^*, \s_B^*)$, such that

 (1) It is optimal for the investor not to exert an effort for the additional information;

 (2) based on the posterior belief updated after PM's signal, the investor funds one project that gives the highest positive expected payoff;

 (3) the information policy maximizes PM's expected payoff.

\end{definition}	

	There are two possible equilibrium outcomes, either $\s_A \ge \s_B$ or $\s_A < \s_B$. In the first equilibrium, the investor will fund project $A$ whenever she observes $s_A =1$ and fund project $B$ only when she observes $s_A = 0$ and $s_B = 1$ because $Pr(\t_A = 1 | s_A = 1) \ge Pr(\t_B = 1 | s_B = 1)$, implying greater profitability for the project $A$ conditional on the signal structure. In the second equilibrium, the investor will always fund project $B$ after observing $s_B = 1$ but fund project $A$ only after $s_A = 1$ and $s_B = 0$ implying $Pr(\t_A = 1 | s_A = 1) \ge Pr(\t_B = 1 | s_B = 1)$.
	
	Since project $A$ yields a higher social surplus than project $B$ does, PM's key concern is how to maximize the likelihood that the investor funds project $A$. The first outcome corresponds to a transparent investment promotion policy that PM highlights profitability of project $A$ when it is indeed profitable. If such a policy is implemented, PM can successfully persuade the investor to fund project $A$ even after she receives the public information that both projects are promising. However, this promotion policy requires a high level of transparency, reducing the probability that $s_A = 1$ is realized.
	
	In contrast, the second outcome refers to an alternative investment promotion policy that PM overly advertises profitability of investing in project $A$ although the investor is likely to lose money --- i.e., $\t_A = 0$. If PM adopts this promotion strategy, PM can induce the investment in project $A$ only if the investor receives bad news about project $B$. Nevertheless, such a promotion strategy can be optimal since PM expects a high likelihood that the investor receives good news on project $A$ although it is unprofitable.
	
	Therefore, a key focus of our analysis is how PM's optimal investment promotion strategy varies with the parameters of financial market characteristics $p$ and $e$. To study how the equilibrium outcome is determined, first suppose that PM's optimal signal is $\s^*_A \ge \s^*_B$. In equilibrium, the investor will fund project $B$ only when $s_A = 0$ is drawn. Hence, the signal $s_B$ must maximize $p + [ p + (1 - p) (1 - \s_B) ] (\mu_B - c)$, the expected payoff from investing in project $B$ conditional on $s_B = 1$. From Proposition \ref{prop:opt_sig_bm}, we have
	\begin{align}	\label{eq:opt_sb_c1}
	\s^{*}_B = \hat{\s}^* := (\underline{\s} \vee \overline{\s}),
	\end{align}
	where $\underline{\s} < \overline{\s}$ if and only if $e \ge \overline{e}$.
	
	However, the investor funds project $A$ whenever $s_A = 1$ is realized. Therefore, the signal $s_A$ must optimize the following ex-ante expected payoff
	$$Pr(s_A = 1) \left[ Pr(\t_A = 1 | s_A = 1) + \mu_A - c \right] + Pr(s_A = 0) Pr(s_B = 1) \left[ Pr(\t_B = 1 | s_B = 1) + \mu_B - c \right],$$
	which is rewritten as
	\begin{align}	\label{eq:payoff_c1}
	U_A(\s_A) := (p + \mu_A - c) + (1 - p) \s_A  \left[ \left\{p + p \left( (2 - c) \wedge \frac{c}{c - e} \right) ( \mu_B - c) \right\} - (\mu_A - c) \right].	
	\end{align}
	We throughout denote $U_A$ as the expected payoff when the optimal signal structure is $\s^*_A \ge \s^*_B$.

	PM's expected payoff is decomposed into two separate terms. The first term, $p + \mu_A - c$ is simply the expected total surplus from the investment in project $A$. More importantly, the second term accounts for the net gains from sending the bad signal about project $A$ ($s_A = 0$). First, the terms inside the curly brackets represent the ex-ante expected social surplus from $s_A = 0$, which equals the probability of $s_B = 1$ multiplied by the expected total surplus from the investment in project $B$. Second, the term $(\mu_A - c)$ is the net social surplus from funding project $A$. In sum, the terms inside the square brackets, which is defined as
	\begin{align} \label{eq:La}
	L_A (p,e) := \left\{p + p \left( (2 - c) \wedge \frac{c}{c - e} \right) ( \mu_B - c) \right\} - (\mu_A - c),
	\end{align}
	are the marginal gains from increasing accuracy of the signal $s_A$: as the probability of $s_A = 0$ increases, PM has to forgo the chances of enjoying the social surplus $\mu_A - c$ that project $A$ generates, whereas it will be more likely to get the surplus from project $B$ that is realized after the investor receives $s_B = 1$. The following lemma would be useful for the subsequent analysis.

\begin{lemma} \label{lemma:L_A}
The equilibrium with $\sigma^*_A \ge \s^*_B$ exists only if
 \begin{align}	\label{eq:nc_c1}
 L_A(p,e) > 0.
 \end{align}
\end{lemma}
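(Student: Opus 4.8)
The plan is to prove the contrapositive: assuming a regime-1 equilibrium (one with $\sigma_A^* \ge \sigma_B^*$) exists, I will show that the slope of PM's reduced objective must be strictly positive, i.e.\ $L_A(p,e) > 0$. The starting point is that in this regime the backup signal $s_B$ matters only on the event $s_A = 0$, so its optimal value is pinned down exactly as in the single-project benchmark; by \eqref{eq:opt_sb_c1} this gives $\sigma_B^* = \hat\sigma^* = \underline{\sigma} \vee \overline{\sigma}$. Substituting this value reduces PM's problem to choosing $\sigma_A$ to maximize $U_A(\sigma_A)$ in \eqref{eq:payoff_c1}, and the key structural observation is that $U_A$ is affine in $\sigma_A$ with slope $(1-p)\,L_A(p,e)$, so the sign of the slope is exactly the sign of $L_A$.

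First I would record the feasible range of $\sigma_A$. The investor's non-negative-profit and no-information-acquisition constraints force $\sigma_A \ge \hat\sigma^*$ (exactly as in \eqref{eq:nc1_sig} and \eqref{eq:nc2_sig}), while staying in the regime $\sigma_A \ge \sigma_B^* = \hat\sigma^*$ imposes the same lower bound; hence $\sigma_A \in [\hat\sigma^*, 1]$. Since $U_A$ is affine, its maximizer sits at an endpoint: if $L_A(p,e) > 0$ the optimum is the transparent choice $\sigma_A^* = 1 > \sigma_B^*$, genuinely of the form $\sigma_A^* \ge \sigma_B^*$; if instead $L_A(p,e) \le 0$ the objective is non-increasing and the optimum is driven to the lower endpoint $\sigma_A^* = \hat\sigma^* = \sigma_B^*$, where PM would like to decrease $\sigma_A$ still further but is blocked by the investor's incentive constraint.

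The crux is then to rule out this collapsed configuration $\sigma_A^* = \sigma_B^* = \hat\sigma^*$ as a genuine regime-1 equilibrium whenever $L_A(p,e) \le 0$. Since PM's preferred direction at this point (lowering $\sigma_A$) points out of the regime, the natural resolution is to compare the candidate with the alternative policy that over-advertises $A$ (keeping $\sigma_A = \hat\sigma^*$) while revealing more about $B$, that is, with the regime-2 deviation governed by the analogue $L_B$. Establishing that PM does at least as well by switching out of regime~1 then shows that no optimum with $A$ non-degenerately more transparent can survive unless the slope is strictly positive, which delivers $L_A(p,e) > 0$.

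I expect this last comparison to be the main obstacle, and it carries the real content of the ``only if.'' The difficulty is that PM's payoff is discontinuous across the shared boundary $\sigma_A = \sigma_B = \hat\sigma^*$: by the tie-breaking rule and $\mu_A > \mu_B$, the regime-1 value jumps strictly above the regime-2 value there (the gap being proportional to $\mu_A - \mu_B$). A naive continuity-of-deviation argument therefore fails, and one must instead weigh the regime-1 endpoint value against the \emph{global} regime-2 optimum rather than its boundary value. I would consequently treat the degenerate boundary case explicitly rather than subsuming it, since it is exactly the regime-2 comparison, and not the trivial affine-slope step, that makes $L_A(p,e) > 0$ the operative necessary condition.
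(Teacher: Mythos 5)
Your first two paragraphs reproduce, in slightly expanded form, the entirety of the paper's own proof: $U_A$ in \eqref{eq:payoff_c1} is affine in $\sigma_A$ with slope $(1-p)L_A(p,e)$, $\sigma^*_B$ is pinned at $\hat{\sigma}^*$ by the benchmark logic, the feasible set is $[\hat{\sigma}^*,1]$, and so optimality of a configuration with $\sigma^*_A \ge \sigma^*_B = \hat{\sigma}^*$ forces $L_A \ge 0$. The gap lies in your third and fourth paragraphs, where you try to upgrade this to the literal statement by ruling out the collapsed configuration $\sigma^*_A = \sigma^*_B = \hat{\sigma}^*$ whenever $L_A \le 0$. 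That step is not merely the ``main obstacle'' you flag --- it is impossible, because the sub-claim you would need is false. For $p < \underline{p}^*$ one has $L_A < 0$ \emph{and} $L_B < 0$, and by Proposition \ref{prop:main_result} (the first line of \eqref{eq:opt_sig}) the global optimum is exactly $(\sigma^*_A,\sigma^*_B) = (\hat{\sigma}^*,\hat{\sigma}^*)$; under the paper's tie-breaking assumption the investor then funds project $A$ whenever $s_A = 1$, so this \emph{is} an equilibrium with $\sigma^*_A \ge \sigma^*_B$ in which $L_A < 0$. Your own discontinuity observation explains why your proposed comparison runs the wrong way precisely there: when $L_B \le 0$, the regime-2 optimum also sits at its lower boundary $\sigma_B = \hat{\sigma}^*$, and the tie-break hands the strictly better (project-$A$-funded) value to the regime-1 side, so switching out of regime 1 hurts PM and the collapsed point survives. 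Your comparison does go through when $L_B > 0 \ge L_A$ --- that is exactly the content of the middle case of Proposition \ref{prop:main_result} --- but it cannot cover $p < \underline{p}^*$.

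The fix is not to argue harder but to read the hypothesis the way the paper implicitly does: the lemma concerns equilibria in which $s_A$ is strictly more informative than the floor, i.e.\ $\sigma^*_A > \hat{\sigma}^* = \sigma^*_B$ (as the sentence following the lemma indicates, the relevant configuration is $\sigma^*_A = 1$). For those, your affine-slope step alone finishes the proof, with no regime-2 comparison needed; note also that it delivers only the weak inequality $L_A \ge 0$, which is in fact all the paper's own proof concludes --- if $L_A = 0$, PM is indifferent over $\sigma_A \in [\hat{\sigma}^*,1]$ and $\sigma^*_A = 1$ remains optimal, so the strict inequality in the statement is itself loose. In short: keep your first two paragraphs, drop the last two, and either restrict the hypothesis to $\sigma^*_A > \sigma^*_B$ or weaken the conclusion to $L_A \ge 0$; as written, your plan stalls on a step that Proposition \ref{prop:main_result} directly contradicts.
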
	
	
	In fact, if PM prefers $\s_A \ge \s^*_B$ and the condition \eqref{eq:nc_c1} is satisfied, the optimal level of accuracy of the signal $s_A$ must be $\s^*_A = 1$. If not, it will be too risky for PM to make the signal $s_A$ more informative than that with the accuracy $\s^*_A = \hat{\s}^*$.
	
	Next, suppose the optimal policy $\s^*_B > \s^*_A$. In the corresponding equilibrium, the investor funds project $B$ whenever $s_B = 1$, while she funds project $A$ only after receiving $s_A = 1$ but $s_B = 0$. This outcome is symmetric to the former outcome with $\s^*_A \ge \s^*_B$. Hence, it is straightforward to observe
	\begin{align}	\label{eq:opt_sa_c2}
	\s^{*}_A = \hat{\s}^*.
	\end{align}
	One can also easily find that PM's total ex-ante expected payoff, denoted by $U_B$, is
	\begin{align}	\label{eq:payoff_c2}
	U_B(\s_B) := (p + \mu_B - c) + (1 - p) \s_B  \left[ \left\{p + p \left( (2 - c) \wedge \frac{c}{c - e} \right) ( \mu_A - c) \right\} - (\mu_B - c) \right].	
	\end{align}
	Like $L_A (p, e)$, we can denote the terms inside the square brackets by
	\begin{align} \label{eq:Lb}	
	L_B(p,e) \equiv  \left\{p + p \left( (2 - c) \wedge \frac{c}{c - e} \right) ( \mu_A - c) \right\} - (\mu_B - c).
	\end{align}
	Then we have the following observation.

\begin{lemma} \label{lemma:L_B}
The equilibrium with $\s^*_B > \s^*_A$ exists only if $L_B(p,e)>0$.
\begin{align}	\label{eq:nc_c2}
L_B (p, e) > 0.
\end{align}
\end{lemma}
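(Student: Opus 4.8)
The plan is to prove the necessary condition by mirroring the argument behind Lemma~\ref{lemma:L_A}, exploiting the symmetry between the two projects. Suppose an equilibrium with $\s^*_B > \s^*_A$ exists. In this regime the investor funds project $B$ whenever $s_B = 1$ and funds project $A$ only in the contingency $(s_A = 1, s_B = 0)$, so project $A$ now plays the secondary role that project $B$ played in the case $\s^*_A \ge \s^*_B$. I would first record that PM's choice of $\s_A$ matters only through the event $s_B = 0$, so $\s_A$ faces exactly the single-project sub-problem of Proposition~\ref{prop:opt_sig_bm}; the non-negative-profit and no-information-acquisition constraints \eqref{eq:nc1_sig} and \eqref{eq:nc2_sig} depend only on $p$, $c$, $e$ and not on the project label, so they apply verbatim. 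This pins down $\s^*_A = \hat{\s}^*$ exactly as in \eqref{eq:opt_sa_c2}.

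Given $\s^*_A = \hat{\s}^*$, the next step is to treat PM's choice of $\s_B$ as the maximization of the payoff $U_B(\s_B)$ in \eqref{eq:payoff_c2}. The decisive observation is that $U_B$ is affine in $\s_B$,
\[
U_B(\s_B) = (p + \mu_B - c) + (1 - p)\,\s_B\, L_B(p,e),
\]
with $L_B$ as in \eqref{eq:Lb} and slope $(1 - p) L_B(p,e)$; since $1 - p > 0$, the sign of the slope coincides with the sign of $L_B(p,e)$. I would then delimit the admissible range of $\s_B$: the case hypothesis requires $\s_B > \s_A = \hat{\s}^*$, the constraints \eqref{eq:nc1_sig} and \eqref{eq:nc2_sig} give $\s_B \ge \underline{\s} \vee \overline{\s} = \hat{\s}^*$, and trivially $\s_B \le 1$, so PM optimizes an affine function over (the closure of) $[\hat{\s}^*, 1]$.

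The final step is the sign argument. An affine function on $[\hat{\s}^*, 1]$ attains a maximizer strictly above the left endpoint only when its slope is positive; hence $\s^*_B > \s^*_A = \hat{\s}^*$ forces $L_B(p,e) > 0$, in which case $U_B$ is strictly increasing and the optimum is the full-revelation level $\s^*_B = 1$. Contrapositively, if $L_B(p,e) \le 0$ then $U_B$ is non-increasing, PM's best response collapses to the left endpoint $\s_B = \hat{\s}^*$, and this yields $\s^*_B = \s^*_A$, contradicting the strict inequality defining the regime. I expect the one delicate point—rather than the routine linear optimization—to be the boundary/tie-breaking bookkeeping: verifying that the reduction of $\s_A$ to Proposition~\ref{prop:opt_sig_bm} is legitimate (symmetric to the construction already carried out for $U_A$), and handling the knife-edge $L_B(p,e)=0$, where $U_B$ is constant, so that no maximizer lies strictly above $\hat{\s}^*$ and the outcome $\s^*_B=\s^*_A$ remains available. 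Once those reductions are granted, the strict inequality $\s^*_B>\s^*_A$ cannot be sustained unless the slope is strictly positive, which is precisely $L_B(p,e) > 0$.
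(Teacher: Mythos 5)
Your proposal is correct and takes essentially the same route as the paper: the paper's own proof of this lemma is just an appeal to symmetry with Lemma~\ref{lemma:L_A}, whose argument is precisely your key step --- $U_B$ is affine in $\s_B$ with slope $(1-p)L_B(p,e)$, so a maximizer strictly above the left endpoint $\hat{\s}^*$ forces a positive slope. If anything your write-up is more careful than the paper's, which in proving Lemma~\ref{lemma:L_A} derives only the weak inequality $L_A \ge 0$ despite the strict statement; that knife-edge ($L_B = 0$, where $U_B$ is constant and PM is indifferent) is exactly the delicate point you flag.
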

	
	Once again, if PM prefers $\s_B > \s^*_A$ and the condition \eqref{eq:nc_c2} is satisfied, the optimal accuracy of the signal $s_B$ must be $\s^*_B = 1$. If not, then PM refrains from making $s_B$ more informative than the one with $\s_B = \hat{\s}^*$, and therefore, the optimal signal cannot be the ones with $\s^*_B > \s^*_A$.
	
	From the discussion above, one can find that a key determinant of the optimal signal structure $(\s^*_A, \s^*_B)$ is the signs of $L_A(p, e)$ and $L_B(p, e)$. If $L_A\le 0$ and $L_B\le 0$, PM has no incentive to make any of either $\s_A$ or $\s_B$ more accurate than the minimum level of accuracy $\hat{\s}^*$ that barely induces the investors to fund project $i \in \set{A, B}$ without any additional information. If $L_A > 0 \ge L_B$, PM prefers $\s^*_A = 1$ and $\s^*_B = \s^*$: improving informativeness of the signal $s_A$ always increases PM's expected payoff, whereas it is too risky to increase the accuracy of $s_B$ from $\s^*$. If $L_B > 0 \ge L_A$, on the contrary, PM's optimal (and only feasible) strategy is $\s^*_A  = \hat{\s}^*$ and $\s^*_B = 1$. If $L_A > 0$ and $L_B > 0$, both strategies, $(\s^*_A , \s^*_B) = (1, \hat{\s}^*)$ and $(\s^*_A, \s^*_B) = (\hat{\s}^*, 1)$, are feasible. In this case, PM compares two options and chooses the one that gives a higher expected payoff than the other.
	
	A key parameter that determines the signs of $L_A$ and $L_B$ is $p$, where $1 - p$ captures the inherent risk of each project $i \in \set{A, B}$. First,  it follows from $\mu_A - c > \mu_B - c > 0$ that $L_A < L_B$ for every $p$. Furthermore, from Lemma \ref{lemma:L_A} and \ref{lemma:L_B}, one can easily observe that both $L_A$ and $L_B$ are linear and increasing in $p$. Specifically, $L_B > 0$ if and only if
	\begin{align}	\label{eq:lb>0}
	p > \underline{p}^* := \frac{(\mu_B - c)}{1 + \frac{c}{c - e} (\mu_A - c)},
	\end{align}
	and $L_A > 0$ if and only if
	\begin{align}	\label{eq:la>0}
	p > \tilde{p}^* := \frac{(\mu_A - c)}{1 + \frac{c}{c - e} (\mu_B - c)}.
	\end{align}

	Since $\mu_A > \mu_B > c$, we have $\tilde{p}^* > \underline{p}^*$. From equations \eqref{eq:lb>0} and \eqref{eq:la>0}, it can be observed that $(\s^*_A, \s^*_B) = (\hat{\s}^*, 1)$ can be a feasible information structure of the signal $(s_A, s_B)$ if and only if $p \ge \underline{p}^*$, and $(\s^*_A, \s^*_B) = (1, \hat{\s}^*)$ can be feasible if and only if $p > \tilde{p}^*$. Lastly, for any given $p \ge \tilde{p}^*$, PM will choose $(\s^*_A, \s^*_B) = (1, \hat{\s}^*)$ if and only if
	$$U_A(1) - U_B(1) = p (\mu_A - \mu_B) \left[ 1 - (1 - p) \left( (2 - c) \wedge \frac{c}{c - e} \right) \right] \ge 0,$$
	which is equivalent to
	\begin{align}	\label{eq:ua>ub_sim}
	p \ge \hat{p}^* := \left( \frac{1 - c}{2 - c} \right) \wedge \left( \frac{e}{c} \right).
	\end{align}

	The optimality of $(\s^*_A, \s^*_B) = (1, \hat{\s}^*)$ obtains when conditions \eqref{eq:la>0} and \eqref{eq:ua>ub_sim} are jointly satisfied. Let $\overline{p}^*$ denote the maximum of $\tilde{p}^*$ and $\hat{p}^*$, i.e., $\overline{p}^* := \tilde{p}^* \vee \hat{p}^*$. Then we can derive the optimal information policy as is described in Proposition \ref{prop:main_result}.
	
\begin{proposition}	[Optimal information policy in the two-projects economy]\label{prop:main_result}
There exist two cutoffs $0 \le \underline{p}^* \le \overline{p}^* \le c$ such that
\begin{equation}	 \label{eq:opt_sig}
\begin{aligned}
(\s^*_A, \s^*_B) = \left \{
\begin{array}{l l}
(\hat{\s}^*, \hat{\s}^*) &\; \mbox{   if    } \;\; p < \underline{p}^*, \\
(\hat{\s}^*, 1) &\; \mbox{   if    } \;\; p \in [\underline{p}^*, \overline{p}^*), \\
(1, \hat{\s}^*) &\; \mbox{   if    } \;\; p \ge \overline{p}^*, \\
\end{array}
\right.
\end{aligned}
\end{equation}
\end{proposition}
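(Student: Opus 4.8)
The plan is to reduce the optimization to a comparison among three candidate ``corner'' policies and then read off the three regions directly from the signs of $L_A$, $L_B$ and the payoff comparison already computed in the text. First I would argue that only the policies $(\hat{\s}^*,\hat{\s}^*)$, $(1,\hat{\s}^*)$ and $(\hat{\s}^*,1)$ can ever be optimal. By the equilibrium structure any candidate falls into exactly one of the ordering cases $\s_A \ge \s_B$ or $\s_B > \s_A$. In the first case equation \eqref{eq:opt_sb_c1} (via Proposition~\ref{prop:opt_sig_bm}) pins down $\s_B = \hat{\s}^*$, and $U_A(\s_A)$ in \eqref{eq:payoff_c1} is affine in $\s_A$ with slope $(1-p)L_A(p,e)$; hence its maximizer on the feasible range is $\s_A = 1$ when $L_A>0$ and $\s_A = \hat{\s}^*$ (collapsing to the floor policy) when $L_A \le 0$. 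The symmetric argument from \eqref{eq:opt_sa_c2}--\eqref{eq:payoff_c2} gives $\s_B = 1$ when $L_B>0$ and the floor otherwise. Feasibility of the two ``highlight'' policies is then exactly governed by the signs of $L_A$ and $L_B$ through Lemmas~\ref{lemma:L_A} and \ref{lemma:L_B}.

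Next I would translate these sign conditions into thresholds in $p$. By \eqref{eq:lb>0}--\eqref{eq:la>0}, $L_B>0 \iff p > \underline{p}^*$ and $L_A>0 \iff p > \tilde{p}^*$, with $\tilde{p}^* > \underline{p}^*$ since $\m_A > \m_B > c$. Thus for $p < \underline{p}^*$ both highlight policies are infeasible and $L_A,L_B \le 0$, forcing PM to the floor $(\hat{\s}^*,\hat{\s}^*)$; for $p \ge \underline{p}^*$ the policy $(\hat{\s}^*,1)$ is feasible and, because $L_B>0$ makes $U_B$ strictly increasing in $\s_B$, it strictly dominates the floor; and $(1,\hat{\s}^*)$ becomes feasible only once $p > \tilde{p}^*$.

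On the overlap $p > \tilde{p}^*$, where both highlight policies are feasible, I would invoke the payoff comparison $U_A(1)-U_B(1) = p(\m_A-\m_B)\bigl[1-(1-p)((2-c)\wedge \tfrac{c}{c-e})\bigr]$, so that $(1,\hat{\s}^*)$ is preferred exactly when $p \ge \hat{p}^*$ from \eqref{eq:ua>ub_sim}. Setting $\overline{p}^* := \tilde{p}^* \vee \hat{p}^*$, the policy $(1,\hat{\s}^*)$ is both feasible and optimal iff $p \ge \overline{p}^*$, giving the third line of \eqref{eq:opt_sig}. For $p \in [\underline{p}^*,\overline{p}^*)$, the policy $(1,\hat{\s}^*)$ is either infeasible (when $p \le \tilde{p}^*$) or dominated by $(\hat{\s}^*,1)$ (when $\tilde{p}^* < p < \hat{p}^*$), so in either subcase $(\hat{\s}^*,1)$ is selected; assembling the three $p$-ranges yields \eqref{eq:opt_sig}.

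The step I expect to be the main obstacle is the middle region: one must check that $(\hat{\s}^*,1)$ is optimal regardless of \emph{which} of the two constraints (feasibility of $(1,\hat{\s}^*)$ or dominance by $(\hat{\s}^*,1)$) is the binding reason that $(1,\hat{\s}^*)$ is not chosen, i.e. regardless of whether $\overline{p}^*$ equals $\tilde{p}^*$ or $\hat{p}^*$. As for the stated ordering, $\underline{p}^* \ge 0$ follows from $\m_B > c$ and $\underline{p}^* \le \overline{p}^*$ from $\tilde{p}^* > \underline{p}^*$ together with $\overline{p}^* = \tilde{p}^* \vee \hat{p}^*$; the remaining bound $\overline{p}^* \le c$ I would verify from the closed forms of $\tilde{p}^*$ and $\hat{p}^*$, noting that whenever a cutoff would exceed $c$ the corresponding top region is simply vacated since $p < c$ by \eqref{eq:risk}.
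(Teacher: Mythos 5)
Your proposal is correct and follows essentially the same route as the paper's own proof: reduce the problem to the three corner policies $(\hat{\s}^*,\hat{\s}^*)$, $(\hat{\s}^*,1)$, $(1,\hat{\s}^*)$ using the linearity of $U_A$ and $U_B$ with slopes proportional to $L_A$ and $L_B$, convert the sign conditions into the thresholds $\underline{p}^*$ and $\tilde{p}^*$, and resolve the region where both highlight policies are feasible by comparing $U_A(1)$ with $U_B(1)$ via $\hat{p}^*$, setting $\overline{p}^* = \tilde{p}^* \vee \hat{p}^*$. If anything, you are slightly more careful than the paper, which restricts attention ``without loss of generality'' to the ordering $\underline{p}^* < \tilde{p}^* < \hat{p}^*$ and never verifies $\overline{p}^* \le c$, whereas you explicitly handle both possible identities of $\overline{p}^*$ in the middle region and note that a cutoff exceeding $c$ merely vacates the corresponding region because $p < c$.
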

	
	Proposition \ref{prop:main_result} tells us how the risk of providing accurate information influences PM's optimal design of the experiment for the promotion of the investments. In other words, the inherent risk of the financial projects determines the adjusted risks of projects exposed to PM, which in turn influences the informational structure of the optimal signals. To see why, first suppose that $p < \underline{p}^*$ for which we have $L_A < L_B \le 0$. In this case, PM suffers net expected losses from failing to promote the investment in any project, i.e., $s_i = 0$ for each $i \in \set{A, B}$. To avoid these losses for each $i \in \set{A, B}$, PM minimizes the expected losses incurred by $s_i = 0$ by lowering the probability $Pr(s_i = 0) = (1 - p) \s_i$ with $\s_i = \hat{\s}^*$.
	
	Second, fix any $p  \in [\underline{p}^*, \tilde{p}^*)$ that yields $L_A \le 0 < L_B$. In this case, PM can get a positive expected surplus conditional on $s_B = 0$. Specifically, a high $p$ implies a sufficiently high probability of $s_A = 1$, thus PM expects expected gains from the investment in project $A$ although he fails to promote investment in project $A$. However, PM will suffer net expected losses from failing to promote investment in project $A$ whenever $s_A = 0$ is realized. Therefore, PM maximizes the expected gains from promoting investment in project $A$ by setting $\s^*_B = 0$ that maximizes the probability $Pr(s_B = 0)$, while he minimizes the expected losses from failure of investment in project $A$ by setting $\s^*_A = \hat{\s}^*$ that minimizes the probability $Pr(s_A = 0)$.
	
	Third, consider $p \ge \tilde{p}^*$ that yields $0 \le L_A < L_B$. In this case, there is no risk of failing to promote the investment in any project $i \in \set{A, B}$ after $s_i = 0$. Since $\mu_A > \mu_B > c$, PM's main purpose of the investment promotion policy is then the maximization of the probability that the investor funds project $A$. To accomplish this mission, PM has two choices that have a tradeoff. First, PM can maximize the probability $Pr(s_A = 1) = p + (1 - p)(1 - \s_A)$ by decreasing $\s_A$ to $\s_A = \hat{\s}^*$. However, this promotion strategy makes the signal $s_A$ relatively uninformative, so it can only induce the investor to fund project $A$ only when she receives $s_B = 0$ and has the negative posterior belief about project $B$. Second, PM can highlight profitability of project $A$ when its true value is $\t_A = 1$ by increasing $\s_A = 1$. This strategy makes the investor unconditionally fund project once she observes $s_A = 1$. However, the probability of $s_A = 1$ inevitably decreases, so the overall probability of funding project $A$ is possibly lower than the first option.
	
	We find that $p$, the inherent risk of the financial projects, once again determines the optimal policy for promoting investment in project $A$. First, fix a relatively low value of $p \in [\tilde{p}^*, \tilde{p}^* \vee \hat{p}^*)$. In this case, project $B$ is likely to be unprofitable ($\t_B = 0$), and therefore, $s_B = 0$ is likely to be drawn relative to $s_B = 1$ for any given $\s_B \in [0, 1]$. To promote investment in project $A$ effectively, it is important for PM to induce the investor to fund project $A$ unconditional on $s_B = 0$. Hence, PM must increase the probability of $s_A = 1$ because the investor, regardless of informativeness of the signal $s_A$, will choose project $A$ after receiving $s_A = 1$ and $s_B = 0$. Correspondingly, PM has to maximize the probability that the investor receives $s_B = 0$ rather than $s_B = 1$ by setting $\s^*_B = 1$. Next, suppose a sufficiently high $p > \tilde{p}^* \vee \hat{p}^*$. In this case, $s_B = 1$ will be likely to be drawn compared to $s_B = 0$. Since the investor is likely to have favorable posterior belief about project $B$, PM must optimally underscore profitability of investing in project $A$ after $s_A = 1$ by making the signal $s_A$ fully informative. In accordance, PM must set $\s^*_B = \hat{\s}^*$: PM's promotion strategy is to highlight the profitability of project $A$ relative to project $B$ when the investor holds favorable posterior belief about both projects.

In sum, the information policy has an interesting implication on the relationship between the intrinsic profitability and the degree of transparency, which is summarized by the following corollary.

\begin{corollary} [Transparency of the investment promotion policy] \label{coro:transparency}
 In the economy with $\mu_A>\mu_B$, it is optimal for PM to be perfectly transparent about the profitability of project A only when the profitability is substantially high ($p\geq \bar{p}^*$). On the contrary, to perfectly reveal its own information is optimal for project B only when its profitability is at the middle ($ p \in [\underline{p}^*, \overline{p}^*)$).

\end{corollary}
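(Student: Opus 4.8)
The plan is to read the corollary directly off the optimal policy characterized in \autoref{prop:main_result}, after first pinning down what ``perfectly transparent'' means in terms of the accuracy parameter. Recall from the model that $\sigma_i = 1$ is precisely the perfect-transparency case in which the signal $s_i$ reveals the true state $\theta_i$, whereas any $\sigma_i \in (0,1)$ leaves residual uncertainty. So the two claims to establish are: (i) $\s^*_A = 1$ if and only if $p \ge \overline{p}^*$; and (ii) $\s^*_B = 1$ if and only if $p \in [\underline{p}^*, \overline{p}^*)$. Both are exactly the cases singled out in \eqref{eq:opt_sig}, so the real content of the argument is to verify that the alternative value $\hat{\s}^*$ taken by the optimal accuracy in the remaining regions is \emph{strictly} below $1$, so that those regions genuinely correspond to imperfect transparency and the word ``only'' is justified.

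The key preliminary step is therefore to show $\hat{\s}^* < 1$ under the maintained parameter restrictions $0 < p < c < 1$ and $0 < e < c$. Since $\hat{\s}^* = \underline{\s} \vee \overline{\s}$ by \eqref{eq:opt_sb_c1}, it suffices to bound each candidate away from $1$. From \eqref{eq:nc1_sig}, $\underline{\s} = 1 - (1-c)\frac{p}{1-p} < 1$ because $p \in (0,1)$ and $c < 1$ make the subtracted term strictly positive. From \eqref{eq:nc2_sig}, $\overline{\s} = 1 - \frac{e}{c-e}\cdot\frac{p}{1-p} < 1$ because $e > 0$ and $c > e$ make the subtracted term strictly positive as well (recall $\overline{\s}$ is the relevant value precisely when $e \ge \overline{e}$, and $\overline{e} < c$). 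Hence $\hat{\s}^* < 1$ throughout.

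With this in hand I would simply partition the line of $p$ according to the three cases of \eqref{eq:opt_sig}. For project $A$: in the two regions with $p < \overline{p}^*$ the optimal accuracy is $\s^*_A = \hat{\s}^* < 1$, while in the region $p \ge \overline{p}^*$ it equals $\s^*_A = 1$; this gives (i), namely that full transparency of $A$ is optimal exactly on $\{p \ge \overline{p}^*\}$, i.e. only for substantially high profitability. For project $B$: only the middle region $p \in [\underline{p}^*, \overline{p}^*)$ yields $\s^*_B = 1$, whereas both the low region $p < \underline{p}^*$ and the high region $p \ge \overline{p}^*$ yield $\s^*_B = \hat{\s}^* < 1$; this gives (ii), the non-monotone ``only at the middle'' statement. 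Combining the two halves yields the corollary.

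I expect no serious obstacle, since the statement is essentially a restatement of \autoref{prop:main_result}; the only point requiring care is the strict inequality $\hat{\s}^* < 1$, which underlies the word ``only'' in both claims and rests on $0 < p < c < 1$ and $0 < e < c$. One minor degenerate case deserves a remark: when $\underline{p}^* = \overline{p}^*$ the middle interval collapses and project $B$ is never fully revealed, which is consistent with the ordering $0 \le \underline{p}^* \le \overline{p}^* \le c$ guaranteed by \autoref{prop:main_result} and does not affect the ``only when'' conclusion.
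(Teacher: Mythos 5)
Your proposal is correct and takes essentially the same route as the paper: the paper provides no separate proof of this corollary, presenting it as a direct reading of the case structure in Proposition \ref{prop:main_result}, which is exactly what you do, with the added (and worthwhile) check that $\hat{\s}^* < 1$ so that ``only'' is justified. One small repair: your parenthetical ``$\overline{\s}$ is the relevant value precisely when $e \ge \overline{e}$'' inverts the inequality (it echoes a typo in the text following \eqref{eq:opt_sb_c1}); by Proposition \ref{prop:opt_sig_bm}, $\hat{\s}^* = \overline{\s}$ precisely when $e \le \overline{e}$, and since $\overline{e} = \frac{(1-c)c}{2-c} < c$, this is what actually guarantees $c - e > 0$ and hence $\overline{\s} < 1$ on that branch.
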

		
	Proposition \ref{prop:main_result} also provides an interesting testable prediction: governments may drastically change their investment promotion policies with the confidence of the private investors. First, suppose that market participants are strongly pessimistic about investing in new industries a government targets to develop. Then the government adopts a promotion policy that \emph{exaggerates} the profitability of investing in those industries on average by pooling some incompetent firms in the industries with competent ones. Second, suppose that the outside investors are relatively optimistic about funding new industries that the government plans to develop. In this case, the government promotes investment in these new industries by providing fully accurate information on the profitability so that the private investors can \emph{separate the wheat from the chaff}. In sum, our analysis can be empirically verified if the firms that earn the government promotions have lower (\textit{resp.} higher) default rates on their debts when the financial market is bullish (\textit{resp.} bearish).

\subsection{Comparative Statics} \label{sec:comparative}

 One of the core assumptions of our model is that investment projects are heterogenous in their effects on the aggregate economy, captured by $\mu_A>\mu_B$. It is natural, hence, to study how the equilibrium property alters when the difference in such an effect becomes greater. For simplicity of the analysis, assume that $\mu_A$ becomes higher while $\mu_B$ is fixed. From equation \eqref{eq:lb>0} and \eqref{eq:la>0}, it is easy to observe that higher $\mu_A$ lowers $\underline{p}^*$ while increases $\tilde{p}^*$, which implies the following proposition.

 \begin{proposition} [Comparative statics with $\mu_A$] \label{prop:mu_A}
 Suppose that $\tilde{p}^* > \hat{p}^*$. If $\mu_A$ becomes higher, (1) the region of $p$ at which PM does not fully reveal its information on project A enlarges while (2) the region of $p$ at which PM perfectly reveals its information on project B enlarges.
 \end{proposition}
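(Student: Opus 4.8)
The plan is to read the two ``regions'' directly off the cutoffs in Proposition \ref{prop:main_result} and then differentiate those cutoffs with respect to $\mu_A$. By the characterization there, PM fails to fully reveal its information on project $A$ exactly when $\s^*_A = \hat{\s}^* \neq 1$, i.e.\ when $p < \overline{p}^*$; so region (1) is the interval $[0, \overline{p}^*)$, of length $\overline{p}^*$. PM perfectly reveals its information on project $B$ exactly when $\s^*_B = 1$, i.e.\ when $p \in [\underline{p}^*, \overline{p}^*)$; so region (2) has length $\overline{p}^* - \underline{p}^*$. Recalling $\overline{p}^* := \tilde{p}^* \vee \hat{p}^*$, the standing hypothesis $\tilde{p}^* > \hat{p}^*$ lets me replace $\overline{p}^*$ by $\tilde{p}^*$, so the two relevant lengths become $\tilde{p}^*$ and $\tilde{p}^* - \underline{p}^*$.

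Next I would establish the monotonicity of each cutoff in $\mu_A$, holding $\mu_B$ (and $c,e$) fixed. From \eqref{eq:la>0}, the denominator of $\tilde{p}^*$ does not involve $\mu_A$, so $\partial \tilde{p}^* / \partial \mu_A = \left(1 + \frac{c}{c-e}(\mu_B - c)\right)^{-1} > 0$; hence $\tilde{p}^*$ is strictly increasing in $\mu_A$. From \eqref{eq:lb>0}, the numerator $\mu_B - c$ of $\underline{p}^*$ does not involve $\mu_A$ while its denominator is strictly increasing in $\mu_A$ (the factor $\mu_B - c$ being positive since $\mu_B > c$), so $\underline{p}^*$ is strictly decreasing in $\mu_A$. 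Finally, $\hat{p}^* = \left(\frac{1-c}{2-c}\right) \wedge \left(\frac{e}{c}\right)$ from \eqref{eq:ua>ub_sim} does not involve $\mu_A$ at all.

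Combining these, region (1) has length $\tilde{p}^*$, which strictly increases in $\mu_A$, so it enlarges; region (2) has length $\tilde{p}^* - \underline{p}^*$, which strictly increases because $\tilde{p}^*$ rises while $\underline{p}^*$ falls, so it too enlarges (and since $\tilde{p}^* > \underline{p}^*$ is maintained along the way, the interval stays well defined). The closest thing to an obstacle is confirming that the identity $\overline{p}^* = \tilde{p}^*$ is preserved along the comparative-statics path rather than only at the initial point: because $\hat{p}^*$ is constant in $\mu_A$ and $\tilde{p}^*$ only rises, the strict inequality $\tilde{p}^* > \hat{p}^*$ assumed initially continues to hold, so $\overline{p}^* = \tilde{p}^* \vee \hat{p}^* = \tilde{p}^*$ throughout and the two length computations remain valid. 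This yields both claims.
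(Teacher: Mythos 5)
Your proof is correct and follows essentially the same route as the paper's: read region (1) as $p<\overline{p}^*=\tilde{p}^*$ and region (2) as $[\underline{p}^*,\tilde{p}^*)$, then observe from \eqref{eq:la>0} and \eqref{eq:lb>0} that $\tilde{p}^*$ is increasing and $\underline{p}^*$ decreasing in $\mu_A$. Your extra check that $\hat{p}^*$ is independent of $\mu_A$, so that $\overline{p}^*=\tilde{p}^*$ persists along the comparative-statics path, is a small but worthwhile tightening of a step the paper leaves implicit.
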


  This implies that it is optimal for PM to be less transparent on project A while to reveal more information on project B as differences in externality becomes greater, which results in higher (\textit{resp.} lower) ex-post expectation on profitability of project A (\textit{resp.} project B).
		
	We complete our analysis by studying how the optimal promotion policy is associated with  $e > 0$, the cost of information acquisition by the investor. Specifically, we conduct comparative statics analysis of how the cutoff values $\underline{p}^*$ and $\overline{p}^*$ vary with the investor's information acquisition cost $e$. To this end, we restrict our attention to the cases $e \le \bar{e}$. In fact, one can easily find the following results by differentiating the respective thresholds $\underline{p}^*, \tilde{p}^*$, and $\hat{p}^*$ in \eqref{eq:lb>0} -- \eqref{eq:ua>ub_sim} by $e$.
	
\begin{proposition} [Comparative statics with $e$]	\label{prop:cs_e}
For every $e \le \overline{e}$, we have the following:
\begin{enumerate}[(i)]

	\item $\underline{p}^*$ is decreasing in $e$;
	
	\item there exists a $\hat{e}^* \le \overline{e}$ such that $\overline{p}^*$ is decreasing in $e$ if $e \le \hat{e}^*$, but $\overline{p}^*$ is increasing in $e$ otherwise.

\end{enumerate}
\end{proposition}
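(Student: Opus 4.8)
The plan is to treat the three cutoffs as explicit functions of $e$ on the restricted domain $e \le \overline{e}$, where the expressions simplify. First I would record that for $e \le \overline{e}$ one has $\frac{c}{c-e} \le 2-c$, so that $(2-c) \wedge \frac{c}{c-e} = \frac{c}{c-e}$, and likewise $\frac{e}{c} \le \frac{\overline{e}}{c} = \frac{1-c}{2-c}$, so that $\hat{p}^* = \frac{e}{c}$. Writing $f(e) := \frac{c}{c-e}$, the elementary fact driving everything is $f'(e) = \frac{c}{(c-e)^2} > 0$ on $(0,c)$; every monotonicity claim reduces to this together with the sign restrictions $\mu_A > \mu_B > c$ from \eqref{eq:invest}.

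For part (i), I would simply differentiate $\underline{p}^* = \frac{\mu_B - c}{1 + f(e)(\mu_A - c)}$ from \eqref{eq:lb>0}. The numerator is a positive constant and the denominator $D(e) := 1 + f(e)(\mu_A - c)$ is strictly increasing because $f' > 0$ and $\mu_A - c > 0$; hence $\frac{d}{de}\underline{p}^* = -(\mu_B - c)\, D'(e)/D(e)^2 < 0$, using $\mu_B > c$. This yields the claim directly, with no case analysis.

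For part (ii), I would decompose $\overline{p}^* = \tilde{p}^* \vee \hat{p}^*$ and analyze the two pieces separately. By exactly the argument of part (i), now with the roles of $\mu_A$ and $\mu_B$ interchanged, $\tilde{p}^*(e) = \frac{\mu_A - c}{1 + f(e)(\mu_B - c)}$ from \eqref{eq:la>0} is strictly decreasing, whereas $\hat{p}^*(e) = \frac{e}{c}$ is strictly increasing. Thus $g(e) := \tilde{p}^*(e) - \hat{p}^*(e)$ is continuous and strictly decreasing on $[0,\overline{e}]$, and since $g(0) = \frac{\mu_A - c}{1 + (\mu_B - c)} > 0$, it has at most one zero in the interval. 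I would then define $\hat{e}^*$ to be that zero when $g(\overline{e}) < 0$ and set $\hat{e}^* = \overline{e}$ otherwise; in either case $\hat{e}^* \le \overline{e}$. On $[0,\hat{e}^*]$ we have $g \ge 0$, so $\overline{p}^* = \tilde{p}^*$ is decreasing, and on $[\hat{e}^*,\overline{e}]$ we have $g \le 0$, so $\overline{p}^* = \hat{p}^*$ is increasing, which is exactly the assertion.

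The only step requiring care is the passage from ``$g$ is strictly decreasing with $g(0) > 0$'' to a well-defined switch point in $[0,\overline{e}]$. The strict monotonicity of $g$ itself (not merely of its two summands) is what forces a single crossing, so that the pointwise maximum is genuinely decreasing-then-increasing rather than oscillating; this is where $\mu_A,\mu_B > c$ enter, to pin down $f' > 0$ and hence $g' < 0$. I expect no further obstacle: whether the crossing falls strictly inside $(0,\overline{e})$ or only at the endpoint is governed by the sign of $g(\overline{e}) = \frac{\mu_A - c}{1 + (2-c)(\mu_B - c)} - \frac{1-c}{2-c}$, but both possibilities are absorbed uniformly into the definition of $\hat{e}^*$ above, so the existence statement $\hat{e}^* \le \overline{e}$ holds irrespective of the particular parameter values.
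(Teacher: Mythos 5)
Your proof is correct and takes essentially the same route as the paper's: part (i) follows from the monotonicity of $\underline{p}^*$ in \eqref{eq:lb>0}, and part (ii) combines $\frac{d\tilde{p}^*}{de}<0$ and $\frac{d\hat{p}^*}{de}>0$ with the boundary comparison $\tilde{p}^*(0)>0=\hat{p}^*(0)$ to get a single crossing point $\hat{e}^*\le\overline{e}$. The only difference is that you make explicit the steps the paper leaves implicit — the reduction $(2-c)\wedge\frac{c}{c-e}=\frac{c}{c-e}$ and $\hat{p}^*=e/c$ on $e\le\overline{e}$, the actual differentiation, and the endpoint case $g(\overline{e})\ge 0$ handled by setting $\hat{e}^*=\overline{e}$.
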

	
	From Lemma \ref{lemma:L_B}, Proposition \ref{prop:cs_e}-(i) is immediate since $L_B$ is increasing in $e$. As $e$ increases, the investor becomes less willing to acquire her own information and thus relies more on PM's information instead. Thus, PM can enjoy higher expected payoff from $s_B = 0$ by increasing $Pr(s_A = 1 | \t_A = 0)$ because the investor will fund project $A$ after $s_A = 1$ although this realized signal is less informative. Hence, the signal with $(\s^*_A, \s^*_B) = (\hat{\s}^*, 1)$ in turn becomes PM's most preferred option even for lower values of $p$ as $e$ increases, thereby leading to $\frac{d \underline{p}^*}{d e} < 0$.
	
	However, it is not obvious whether $\overline{p}^*$ is increasing or decreasing in $e$. Indeed, there are two necessary (and sufficient if combined altogether) conditions for the optimality of $(\s^*_A, \s^*_B) = (1, \hat{\s}^*)$. First, $L_A$ should not be negative. That is, it should be highly likely that project $B$ is funded, and thus PM must not suffer losses from the failure of promoting the investment in project $A$ after $s_A = 0$. This condition ensures feasibility of the signal structure $(\s^*_A, \s^*_B) = (1, \hat{\s}^*)$. Furthermore, this condition is satisfied if and only if $p \ge \tilde{p}^*$, as was seen in \eqref{eq:la>0}. Second, the signal structure $(\s_A, \s_B) = (1, \hat{\s}^*)$ must give PM a higher expected payoff than that with $(\s_A, \s_B) = (\hat{\s}^*, 1)$. This condition ensures optimality of $(\s^*_A, \s^*_B) = (1, \hat{\s}^*)$, and is satisfied if and only if $p \ge \hat{p}^*$, as was shown in \eqref{eq:ua>ub_sim}. The changes of $\overline{p}^*$ with $e$ therefore crucially depends on which of these two conditions, \eqref{eq:la>0} and \eqref{eq:ua>ub_sim}, is stronger than the other. If the condition \eqref{eq:la>0} is stronger, then $\overline{p}^* = \tilde{p}^*$, and thus $\overline{p}^*$ decreases in $e$. On the other hand, if the other condition \eqref{eq:ua>ub_sim} is stronger, then $\overline{p}^* = \hat{p}^*$, and thus $\overline{p}^*$ increases in $e$.
	
	Proposition \ref{prop:cs_e}-(ii) shows that the necessary condition $L_A \ge 0$ in \eqref{eq:la>0} is stronger than the other condition $U_A(1) \ge U_B(1)$ in \eqref{eq:ua>ub_sim} if $e$ is relatively low, and vice versa. When $e$ is sufficiently small, PM must send accurate information to the investor in order to keep her from gathering her own information. Thus, the probability of funding project $i$ conditional on $\t_i = 0$ will diminish. Particularly, this negative effect will hit PM more severely when he chooses $(\s_A, \s_B) = (\hat{\s}^*, 1)$: since $Pr(s_A = 1) = p + (1 - p) (1 - \hat{\s}^*)$, a small $e$ lowers the probability of funding project $A$ that gives large surplus $\mu_A$; since $Pr(s_B = 1) = p$, there is no effect of $e$ on the probability of funding project $B$ that yields small surplus $\mu_B$. Hence, PM's best choice will be $(\s_A, \s_B) = (1, \hat{\s}^*)$ once the associated feasibility condition \eqref{eq:la>0} is satisfied. 
	
	Next, consider the opposite case where $e$ is relatively high. In this case, the investor refrains from using her own information because it is too costly to gather. This means that PM's signals need not to be strongly accurate to effectively promote investments in the financial projects. Therefore, the necessary conditions $L_A \ge 0$ and $L_B \ge 0$ are satisfied even for sufficiently small values of $p$: that is, a high $e$ weakens the necessary condition \eqref{eq:la>0}. On the other hand, a high effort cost of information acquisition make PM prefer the signal structure $(\s_A, \s_B) = (\hat{\s}^*, 1)$ to $(\s_A, \s_B) = (\hat{\s}^*, 1)$: a high $e$ increases the probability $Pr(s_A = 1) = p + (1 - p) (1 - \hat{\s}^*)$ of funding project $A$, which gives larger non-financial surplus $\mu_A$ than project $B$. Hence, for high values of $e$, PM will not choose $(\s_A, \s_B) = (1, \hat{\s}^*)$ unless $p$ is high enough to guarantee $U_A(1) \ge U_B(1)$.
	
	It is worth noting that the optimal promotion policy can be dramatically shifted as the financial markets provide useful information to the market participants more effectively. However, these effects of the financial market development on the optimal investment promotion policy depends on the ex-ante risks of financial projects. For instance, suppose $p$ is low to an extent that we can have either $p \in [\underline{p}^*, \overline{p}^*)$ or $p < \underline{p}^*$. If $e$ is relatively high, then we have $p \in [\underline{p}^*, \overline{p}^*)$. In this case, the optimal promotion strategy is to send relatively uninformative signal about project $A$ ($\s^*_A = \hat{\s}^*$) but perfectly informative signal about project $B$ ($\s^*_B = 1$). However, if $e$ is sufficiently low to an extent that $p < \underline{p}^*$, PM becomes strongly concerned about failing to raise capital for the projects. Thus, the optimal promotion strategy will be shifted to exaggerating profitability of both projects although they are not ($\s^*_A = \s^*_B = \hat{\s}^*$). In sum, when the investment in a financial project is viewed as being inherently risky, PM refrains from sending fully informative signal about each project as the investor can obtain her own information at the financial markets in a more cost-effective way.
	
	If $p$ is relatively high, there may be no monotonic relationship between the information that PM's signals contain and effectiveness of the information acquisition at the financial markets. This result arises from the observation that $\overline{p}^*$ is convex in $e$. If $e$ is relatively high so that $p < \overline{p}^*$, PM prefers sending fully informative signal about project $B$ but relatively uninformative signal about project $A$ (i.e., $(\s^*_A, \s^*_B) = (\hat{\s}^*, 1)$): PM believes that $p$ is not high enough to promote the investment in project $A$ by truthfully highlighting its profitability regarding that the investor are likely to receive $s_A = s_B = 1$. As $e$ decreases to an extent $p \ge \overline{p}^*$, PM changes his view on $p$, and also changes the promotion policy to the signals with $(\s^*_A, \s^*_B) = (1, \hat{\s}^*)$. However, if $e$ decreases further so that we have $p < \overline{p}^*$, PM perceives that he will make losses on average if $s_A = 0$ is drawn. To minimize these losses from failing to fund project $A$, PM returns his promotion policy once again to $(\s^*_A, \s^*_B) = (\hat{\s}^*, 1)$ so that he can minimize the probability that the investor chooses project $B$.

\section{Conclusion} \label{sec:conclusion}

In this paper, we apply the Bayesian persuasion framework, \emph{\`{a} la} \citet{KG2011}, to an economy with a government that tries to promote large-scale investment projects with positive externality under information frictions and derive the optimal information strategy. We then investigate how the optimal policy varies when the underlying features of the economy change.

The most important distinctive feature of our analysis from the previous literature on an industrial policy (including state-promoted large scale investment) is that we emphasize the informational role of the government to fund the projects in the presence of \textit{(i)} information friction and \textit{(ii)} ignorance of private investors on positive spillover effects from the projects. While the model analyzed in our work is simple hence parsimonious, we believe that the framework can be further utilized in various economic circumstances. For instance, environmental policy is one of the very important projects that cannot be successfully implemented without efforts by policy makers but has positive externality on the aggregate economy \citep{Gates2021}. One can easily apply our model to the economy with a policy maker who wants to maximize its own utility, which may harm the economy (eg. negative spillover effect). In addition, as was discussed at the end of Section \ref{sec:eq_cha}, our findings have testable implications with the data; utilizing the data on firm's default rates, one would be able to test if government really has incentives to promote some particular industries.

One potential caveat of our model is that we do not consider the role of endogenous public signals \citep{RS14jet, BG2015} in shaping equilibrium outcomes. Another interesting extension of our model is to consider reputation concerns of a policy maker and/or learning by private investors from equilibrium outcomes realized in the previous periods in a dynamic setup \citep{AHP:07}, which might affect our findings. We leave such interesting extensions as future works.



\bk 	
	
\pagebreak
	
\bibliographystyle{aea}
\bibliography{bayesian_bib}

\newpage

\appendix

\section*{Appendix}

\section{Proofs} \label{app_proof}

\begin{proof}[Proof of Proposition \ref{prop:opt_sig_bm}]

	From \eqref{eq:PM_U_bm}, PM's expected payoff is decreasing in $\s_A$. Therefore, it is optimal for PM to lower $\s_A$ as much as possible subject to the constraints that $P(\t_A = 1 | s_A = 1) \ge c$ and the investor gets a higher expected payoff from not acquiring her own information at the cost $e$, which implies $\s_A = (\overline{\s} \vee \underline{\s})$.	\end{proof}

\begin{proof}[Proof of Lemma \ref{lemma:L_A}]
From \eqref{eq:payoff_c1}, it can be easily observed that $U_A$ is increasing in $\s_A$ if and only if $L_A \ge 0$. Hence, for $\s_A = \s^*_A (\ge \s^*_B = \hat{\s}^* > 0)$ to be PM's optimal strategy, we must have $L_A \ge 0$. \end{proof}

\begin{proof}[Proof of Lemma \ref{lemma:L_B}]
One can prove by applying the same logic used for the proof of Lemma \ref{lemma:L_A}, so we omit the formal proof.\end{proof}

\begin{proof}[Proof of Proposition \ref{prop:main_result}]

	Without loss of generality, we restrict our attention to the cases $\underline{p}^* < \tilde{p}^* < \hat{p}^*$.

	To prove part (i), fix any $p < \underline{p}^*$ at which $L_A, L_B < 0$. We first show that the optimal signal must yield $Pr(\t_A = 1 | s_A = 1) \ge Pr(\t_B = 1 | s_B = 1)$. To show this by contradiction, suppose $Pr(\t_A = 1 | s_A = 1) < Pr(\t_B = 1 | s_B = 1)$, which implies $\s^*_A < \s^*_B$. Then we must have $\s^*_A = \hat{\s}^* > 0$ and PM's total expected payoff must be equal to $U_B$. However, since $L_B < 0$ for any $p < \underline{p}^*$, PM can increase his expected payoff by lowering $\s_B$ below $\s^*_B$, a contradiction. Furthermore, since the investor funds project $B$ only after receiving $s_A = 0$ and $s_B = 1$, PM maximizes his expected payoff conditional on $s_A = 0$ by setting $\s^*_B = \hat{\s}^*$. Lastly, PM's total expected payoff from $\s_A \ge \s_B$ is $U_A$, which is decreasing in $\s_A$ since $L_A < 0$. Therefore, the optimal signal is $\s^*_A = \s^*_B = \hat{\s}^*$.
	
	To prove part (ii), fix any $p \in [\underline{p}^*, \tilde{p}^*)$, where $L_A < 0 \le L_B$. Since $L_A < 0$, an optimal signal must have the structure $\s^*_A < \s^*_B$. Moreover, since the investor always funds project $B$ once she receives $s_B = 1$, PM maximizes the expected payoff conditional on $s_B = 0$ by setting $\s^*_A = \hat{\s}^*$. Lastly, PM's total expected payoff $U_B$ is increasing in $\s_B$, which implies $\s^*_B = 1$.
	
	Next, fix any $p \in  [\tilde{p}^*, \hat{p}^*)$. Since $p \ge \tilde{p}^*$, we have $L_B > L_A \ge 0$, and therefore, PM's optimal signal must be either $(\s^*_A, \s^*_B) = (1, \hat{\s}^*)$ or $(\s^*_A, \s^*_B) = (\hat{\s}^*, 1)$. Therefore, PM chooses one of these two signals that yields higher expected payoff than the other. To find which of two signals yields higher expected payoff, we rewrite $U_A$ as
	$$U_A = K(p, c) + p \left[ \mu_A + (1 - p) (2 - c) \wedge \frac{c}{c - e} \mu_B \right]$$
	and $U_B$ as
	$$U_B = K(p, c) + p \left[ \mu_B + (1 - p) (2 - c) \wedge \frac{c}{c - e} \mu_A \right],$$	
	where $K(p, c)$ is the residual polynomial of $p$ and $c$ that excludes $\mu_A$ and $\mu_B$. Since $\mu_A > \mu_B$, $U_A - U_B$ is increasing in $p$. Furthermore, it follows from the definition of $\hat{p}^*$ that $U_B > U_A$ for all $p \in  [\tilde{p}^*, \hat{p}^*)$, which implies the optimal signal is $(\s^*_A, \s^*_B) = (\hat{\s}^*, 1)$.
	
	To prove part (iii), fix any $p \ge \hat{p}^*$. Since $L_B \ge L_A > 0$ and $U_A \ge U_B$, the optimal signal is $(\s^*_A, \s^*_B) = (1, \hat{\s}^*)$.	\end{proof}

\begin{proof}[Proof of Proposition \ref{prop:mu_A}]

It is immediate from equation \eqref{eq:lb>0} that higher $\mu_A$ lowers $\underline{p}^*$ while it increases $\tilde{p}^*$ from equation \eqref{eq:la>0}. Hence, it follows that the region at which PM lies about project A ($p<\tilde{p}^*$) enlarges while PM is truthful abut project B ($ p \in [\underline{p}^*, \tilde{p}^*)$) become larger with $\tilde{p}^* > \hat{p}^*$.

\end{proof}

\begin{proof}[Proof of Proposition \ref{prop:cs_e}]

	We first prove part (i). By definition of $\underline{p}^*$ as stated in \eqref{eq:lb>0}, it is obvious that $\underline{p}^*$ is decreasing in $e$.
	
	We next prove part (ii). From \eqref{eq:la>0}, $\frac{d\tilde{p}^*}{de} < 0$ is immediate. Furthermore, From \eqref{eq:ua>ub_sim}, one can easily observe $\frac{d \hat{p}^*}{de} > 0$. Since $\lim_{e \rightarrow 0} \tilde{p}^* > 0 = \lim_{e \rightarrow 0} \hat{p}^*$, there exists a $\hat{e}^* \le \overline{e}^*$ such that $\overline{p}^* = \tilde{p}^*$ if $e \le \hat{e}^*$ and $\overline{p}^* = \hat{p}^*$ otherwise, which also implies $\overline{p}^*$ is decreasing in $e$ if $e \le \hat{e}^*$ and increasing in $e$ otherwise. 	\end{proof}

\end{document}